\newtheorem{theorem}{Theorem}
\newtheorem{proposition}{Proposition}
\newtheorem{lemma}{Lemma}
\newtheorem{corollary}{Corollary}
\newtheorem{definition}{Definition}
\newtheorem{claim}{Claim}
\newtheorem{remark}{Remark}
\DeclareMathOperator{\im}{\mathrm{im}}
\newcommand{\eps}{\varepsilon}
\newcommand{\rk}{\mathrm{rk}}
\newcommand{\full}{\mathrm{full}}
\newcommand{\daniel}[1]{\textcolor{violet}{Daniel: #1}}
\newcommand{\simon}[1]{\textcolor{blue}{Simon: #1}}
\title{Holey graphs: very large Betti numbers are testable}
\author[1]{D\'aniel Szab\'o}
\author[1]{Simon Apers}
\affil[1]{Universit\'e Paris Cit\'e, CNRS, IRIF, Paris, France\vspace{10pt}}
\date{}
\begin{document}
\begin{titlepage}
\maketitle
\thispagestyle{empty}

\begin{abstract}
\normalsize

We show that the graph property of having a (very) large $k$-th Betti number $\beta_k$ (over $\mathbb{Z}_2$) for constant $k$ is testable with a constant number of queries in the dense graph model.
More specifically, we consider a clique complex defined by an underlying graph and prove that for any $\eps>0$, there exists $\delta(\eps,k)>0$ such that testing whether $\beta_k \geq (1-\delta) d_k$ for $\delta \leq \delta(\eps,k)$ reduces to tolerantly testing $(k+2)$-clique-freeness, which is known to be testable.
This complements a result by Elek (2010) showing that Betti numbers are testable in the bounded-degree model.
For our result we consider simplicial complexes as combinatorial objects, and combine matroid theory and the graph removal lemma.
\end{abstract}

\vspace*{\fill}

\small This preprint has not undergone peer review (when applicable) or any post-submission improvements or corrections. The Version of Record of this contribution is published in SOFSEM 2025: Theory and Practice of Computer Science, and is available online at \url{https://doi.org/10.1007/978-3-031-82697-9_22}.

\end{titlepage}

\section{Introduction}

There has been an increasing interest in the notion of ``topological data analysis'' (TDA), where data is interpreted as a topological object called a ``simplicial complex'', and the homology of the complex is used to robustly classify the data.
An important example of such a complex is the Vietoris-Rips or flag complex -- this complex is obtained by associating a graph to the input data (e.g., with edges indicating similarity between data points), and the complex consisting of all subsets that induce cliques in the graph.
An important feature of this simplicial complex is the so-called Betti number, or intuitively the number of high dimensional ``holes''.
In particular, the so-called persistent Betti numbers have been useful for applications because they capture a scale-independent global property of the data set \cite{pranav_cosmic,krishnapriyan2021machine,bukkuri_oncology}.
Unfortunately, calculating Betti numbers exactly is computationally hard \cite{crichigno2022clique}, and so the task of estimating them has been in the focus of interest \cite{chazal15,lloyd2016quantum,hayakawa2021quantum,berry2022quantifying,apers2023simple}.
In this work we use the lense of ``(graph) property testing'' to further our understanding of Betti number estimation.

In graph property testing we wish to decide whether a graph has a certain property, or whether it is ``far'' from having that property \cite{goldreich2010introduction}.
In the \emph{dense} graph model, an $n$-vertex graph is $\eps$-far from having a property if we have to add or remove more than $\eps n^2$ edges for the graph to have the property.
A \emph{tester} for a given property is a randomized algorithm that, given query access to the adjacency matrix of a graph $G$, can distinguish with constant success probability whether $G$ has that property or is $\eps$-far from having it.
A graph property is said to be \emph{testable} if there exists a tester that makes a number of queries that is a function only of $\eps$, and so independent of the graph size.
Examples of testable properties are bipartiteness, triangle-freeness and, more generally, monotone (closed under removing edges) and hereditary (closed under removing vertices) graph properties \cite{alon2005every,alon2008characterization}.

In this work we prove the following theorem (for a more formal statement see \Cref{thm:testing_Betti_formal}).

\begin{theorem}[Informal] \label{thm:Betti-testable}
The property of having a (very) large $k$-th Betti number is testable for constant $k$.
\end{theorem}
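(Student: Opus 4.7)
My plan is to reduce testing $\beta_k \geq (1-\delta) d_k$ to tolerantly testing $(k+2)$-clique-freeness, as signalled in the abstract. The starting point is the rank-nullity identity in the clique complex,
\[
\beta_k \;=\; d_k \,-\, \rk \partial_k \,-\, \rk \partial_{k+1},
\]
which turns the condition $\beta_k \geq (1-\delta) d_k$ into $\rk \partial_k + \rk \partial_{k+1} \leq \delta d_k$. The first summand is automatically controlled: $\rk \partial_k \leq d_{k-1} = O(n^k)$, which is of lower order than $d_k$ in the regime $d_k = \Omega(n^{k+1})$ of interest (the opposite regime where $d_k$ is tiny can be handled by a separate easy argument). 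So the real substance is the bound $\rk \partial_{k+1} \leq \delta d_k$, and I need to connect this to graph-theoretic clique counts.

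The easy direction of that connection is: if $G$ is $\eta$-close to $(k+2)$-clique-free, then deleting $O(\eta n^2)$ edges yields a graph $G'$ in which the chain group $C_{k+1}$ is trivial, so $\rk \partial_{k+1} = 0$ in $G'$; combined with the $\rk \partial_k$ bound above this gives $\beta_k(G')\ge(1-\delta)d_k(G')$ for appropriate parameters, so $G$ is close to a Yes-instance. The hard direction is: $\beta_k \geq (1-\delta) d_k$ implies that $G$ has only $o(n^{k+2})$ many $(k+2)$-cliques, from which the graph removal lemma (applied to the hereditary property of $(k+2)$-clique-freeness) yields closeness to $(k+2)$-clique-freeness. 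Together these give the reduction: the tester for $\beta_k \geq (1-\delta) d_k$ is simply the tolerant tester for $(k+2)$-clique-freeness.

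The main obstacle is the hard direction, because small $\rk \partial_{k+1}$ does not \emph{a priori} imply few $(k+2)$-cliques: there could be many $(k+2)$-cliques whose boundaries are linearly dependent, with dependencies coming from boundaries of $(k+3)$-cliques (via $\partial^2 = 0$) or from nontrivial $(k+1)$-cycles in the complex. I would control this via matroid theory on the boundary operator, arguing that each ``excess'' $(k+2)$-clique beyond $\rk \partial_{k+1}$ must be accounted for by a dependency that itself forces many higher-order cliques, and then iterating the argument a constant number of times bounded in $k$. The constancy of $k$ is what keeps the loss in the iteration bounded, allowing the resulting clique-count bound to be translated, via the removal lemma, into closeness to $(k+2)$-clique-freeness.
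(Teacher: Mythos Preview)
Your overall framework matches the paper's: use the identity $\beta_k = d_k - r_k - r_{k+1}$ (with $r_j = \rk\partial_j$), deduce that large $\beta_k$ forces $r_{k+1}$ to be small, and reduce to tolerant testing of $K_{k+2}$-freeness via the graph removal lemma. You also correctly locate the crux: passing from ``$r_{k+1}$ small'' to ``$d_{k+1}$ (the number of $(k+2)$-cliques) small.''

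But your proposed attack on that step is both overcomplicated and unlikely to succeed. You suggest that each dependent $(k+2)$-clique beyond $r_{k+1}$ forces higher-order cliques, and then iterate in $k$. Consider the complete balanced $(k+2)$-partite graph: it has $d_{k+1} = \Theta(n^{k+2})$ while $r_{k+1} = d_{k+1} - \beta_{k+1} = O(n^{k+1})$, yet it is $K_{k+3}$-free --- there are \emph{no} higher cliques at all. Every dependency among the $(k+1)$-faces comes from nontrivial $(k+1)$-cycles, not from $\im\partial_{k+2}$, so your iteration has nowhere to go.

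The paper resolves this step with a one-line averaging bound (Lemma~\ref{claim:faces}): for any fixed vertex $u$, the $(k+1)$-faces containing $u$ are linearly independent, because each has in its boundary a $k$-face (the one obtained by deleting $u$) that none of the others has. Taking $u$ of maximum $(k+1)$-face-degree gives
\[
r_{k+1} \;\ge\; \frac{k+2}{n}\,d_{k+1},
\qquad\text{i.e.}\qquad
d_{k+1} \;\le\; \frac{n}{k+2}\,r_{k+1} \;\le\; \frac{\delta}{(k+2)!}\,n^{k+2},
\]
and the removal lemma applies directly --- no iteration, no higher cliques, no matroid machinery beyond linear independence.

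A minor difference in the other direction: you control $r_k$ via $r_k \le d_{k-1} = O(n^k)$ and then case-split on the size of $d_k$. The paper instead just uses $r_k \ge 0$ to get $r_{k+1} \le \delta d_k$ straight from the hypothesis, and for the far case argues that $K_{k+2}$-freeness already implies the Betti property, so being $\eps$-far from the latter gives $\eps$-far from the former.
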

\noindent
Strictly speaking, we consider the $k$-th Betti number of the \emph{clique complex} associated to $G$.
This is the simplicial complex defined by the family of all vertex subsets that induce a clique in $G$.
On an intuitive level, the $k$-th Betti number $\beta_k$ of this complex counts the number of independent $k$-dimensional ``holes'' in the complex, which is bounded by the number of $k$-cliques $d_k$ in $G$.
More formally, $\beta_k$ equals the rank of the $k$-th \emph{homology group}.
We prove \cref{thm:Betti-testable} by showing that, for any constant $k$ and $\eps>0$, there exists $\delta(\eps,k)>0$ such that testing whether $\beta_k \geq (1-\delta) d_k$ for some fixed $\delta \leq \delta(\eps,k)$ reduces to tolerantly testing $(k+2)$-clique-freeness.
We prove the result for
\[
\delta(\eps,0) = \sqrt{2\eps} \; , \qquad
\delta(\eps,1) = \eps/3 \; , \qquad
\delta(\eps,k) = 1/\mathrm{tower}(k^4\log(1/\eps)) \;\; (k>1).
\]
Here the $\mathrm{tower}(\ell)$-function denotes a height-$\ell$ tower of powers of 2's -- this explains the extra quantifier in ``\emph{(very)} large Betti numbers''.
Nonetheless, this property is neither trivial for constant $k$ and $\eps$ nor monotone or hereditary.
To see this, consider the $(k+1)$-partite graph which has $d_k = (\frac{n}{k+1})^{k+1}$ and $\beta_k = (\frac{n}{k+1}-1)^{k+1}$ \cite[Proposition 1]{berry2022quantifying}, and so $\beta_k/d_k = 1 - O(k^2/n)$.
This shows that for any $k$ and $\delta>0$ there exist graphs with the property $\beta_k \geq (1-\delta) d_k$. (Compare this to our \Cref{claim:construction} which states that any large clique complex that has few $k$-faces is close to having a large $k$-th Betti number.)
Moreover, the quantity $\beta_k/d_k$ increases as a function of $n$, so the property cannot be monotone or hereditary.


\paragraph{Betti numbers in the bounded-degree model.}
This work was partly motivated by the result of Elek \cite{elek2010betti} in the \emph{bounded-degree model}.
In this model, the graph is assumed to have a constant bound $d$ on the vertex degrees, and a query reveals the (at most $d$) neighbours of a vertex.
Elek showed that, for any $\eps > 0$ and with a number of queries only dependent on $\eps$, it is possible to return an estimate $\hat\beta_k$ satisfying $\hat\beta_k = \beta_k \pm \eps n$ for $k < d$ (for $k \geq d$ necessarily $d_k = \beta_k = 0$).
The proof is based on (sparse) graph limits, and is very different from our approach.

Unfortunately, such a result in the dense graph model is not possible: returning an estimate $\hat\beta_k = \beta_k \pm \eps n$ (or even $\hat\beta_k = \beta_k \pm \eps d_k$) requires $\Omega(n)$ many queries in the dense graph model.
To see this, consider the case $k=0$ for which $d_0 = n$ and $\beta_0$ equals the number of connected components of the graph.
The cycle graph has $\beta_0 = 1$ while any graph with $\leq n/2$ edges has $\beta_0 \geq n/2$.
However, it takes $\Omega(n)$ queries to distinguish these graphs in the dense model.
This motivates the weaker formulation of large Betti number testability that we use in our work.\footnote{Note also that the contrapositive, having a small Betti number, is trivial to test.
E.g., a graph cannot be far from having small Betti number $\beta_0$ since we can always add a cycle, thereby setting $\beta_0 = 1$.}

\paragraph{Quantum algorithms for large Betti numbers.}
Another motivation came from a recent stream of works on quantum algorithms for estimating Betti numbers (see e.g.~\cite{lloyd2016quantum,gyurik2022towardsquantum,hayakawa2021quantum,crichigno2022clique}).
Under certain conditions (e.g., a form of well-conditionedness), these works suggest an exponential quantum speedup over classical algorithms, returning an estimate $\hat\beta_k = \beta_k \pm \eps n^k$ in time $\mathrm{poly}(n,k,1/\eps)$.
Such an estimate is only relevant for very large Betti numbers ($\beta_k$ scaling with $n^k$), and a stringent question is whether ``typical'' graphs can have such large Betti numbers.
In a recent work \cite{apers2023simple} 
a classical benchmark algorithm was proposed based on path-integral Monte Carlo.
The algorithm has a polynomial runtime in certain regimes, narrowing down the conditions for a potential exponential quantum speedup.

The current work investigates this question from a new (property testing) perspective, and it yields a tool to investigate whether typical graphs can have a (very) large Betti number.

\paragraph{Open questions.}
Our work raises a number of open questions.
The most obvious one is whether our results can be pushed further.
E.g., it might be possible to test more moderately sized Betti numbers, or Betti numbers for non-constant $k$ (the case of interest for quantum algorithms).

Another open direction is to generalize the framework of graph property testing to simplicial complexes more generally.
By limiting ourselves to clique complexes, we could phrase our results in the graph property testing language, but this might not be the most natural approach.

\paragraph{Outline of proof and paper.}
In \Cref{sec:prelim} we formally introduce property testing, simplicial complexes, and the necessary matroid theory.

The first contribution appears in \Cref{sec:independence}.
We use the matroid notion of independence to relate the Betti number $\beta_k$ to the number of ``independent'' $K_{k+2}$ cliques in the graph, and we bound the total number of cliques as a function of the number of independent cliques.

Finally, in \Cref{sec:testing-betti}, we build on these tools to reduce the problem of testing large Betti numbers to that of (tolerantly) testing clique-freeness, which is known to be testable.
In particular, we show that having a large Betti number implies that the graph is close to being $K_{k+2}$-free, while being far from having a large Betti number implies that the graph is far from being $K_{k+2}$-free.

\section{Preliminaries} \label{sec:prelim}

In this section we introduce necessary but well-known preliminaries on graph property testing, simplicial complexes and matroid theory.

\subsection{Property testing and subgraph freeness}
In graph property testing we want to decide if an $n$-vertex input graph $G$ has a property $P$ or if it is $\eps$-far from any graph satisfying $P$. In the dense graph model we have query access to elements of the adjacency matrix of $G$, and we want to minimise the number of queries we make. This model was introduced in \cite{GGR98}.

\begin{definition}[$\eps$-far] \label{def:epsfar}
    The distance of two graphs $G$ and $G'$ is defined as
    $$D(G,G')=\frac{\min_\pi\{G\triangle\pi(G')\}}{n^2},$$
    where $\pi$ is any permutation of the vertices and $\triangle$ denotes the symmetric difference of the two edge sets.
    We say that $G$ is \emph{$\eps$-far} from property $P$ if $D(G,G')>\eps$ for all $G'$ satisfying $P$.
\end{definition}

\begin{definition}[Property tester]
    In the dense graph model, a randomised algorithm $A$ is a property testing algorithm for property $P$ if given query access to the adjacency matrix of the input graph $G$, it satisfies the following.
    \begin{itemize}
        \item If $G$ satisfies $P$ then $A$ returns ``YES'' with probability $\ge 2/3$.
        \item If $G$ is $\eps$-far from $P$ then $A$ returns ``NO'' with probability $\ge 2/3$.
    \end{itemize}
    A \emph{tolerant} property testing algorithm satisfies a stronger constraint: for some $\eps_1<\eps_2$, it distinguishes (i) $G$ being $\eps_1$-close to $P$, from (ii) $G$ being $\eps_2$-far from $P$.
\end{definition}

We call a property $P$ \textit{(tolerantly) testable} if there is a (tolerant) property testing algorithm such that the number of queries it makes is independent of the input length.

The following well-known lemma (that can be proved using the Szemerédi regularity lemma \cite{szemeredi}) has been central to proving many testability results, and we will also use it.
\begin{lemma}[Graph removal lemma, \cite{Furedi95}] \label{lemma:graph_removal}
    For any graph $H$ and any $\eps>0$ there exists a $\delta>0$ such that the following holds: any $n$-vertex graph $G$ ($|V(H)|<n$) that contains at most $\delta n^{|V(H)|}$ copies of $H$ as subgraphs can be made $H$-free by removing at most $\eps n^2$ edges (i.e. $G$ is $\eps$-close to being $H$-free).
\end{lemma}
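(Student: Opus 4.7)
The plan is to deduce the lemma from the Szemerédi regularity lemma together with an accompanying counting (embedding) lemma for $H$. Write $h=|V(H)|$. Given $H$ and $\eps>0$, I would first fix auxiliary constants $d,\eta>0$ (to be tuned as explicit functions of $\eps$ and $h$) and apply Szemerédi's regularity lemma to $G$ to obtain an $\eta$-regular equipartition $V_1,\dots,V_m$ with $m$ bounded above by a tower-type function of $1/\eta$ and below by some constant.

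Next I would perform a cleaning step: build $G'\subseteq G$ by deleting every edge that falls into at least one of the three ``bad'' categories: (i) edges inside a single part $V_i$, (ii) edges between an $\eta$-irregular pair $(V_i,V_j)$, and (iii) edges between a regular pair of density strictly less than $d$. A direct count shows that the number of deleted edges is at most roughly $(1/m+\eta+d)\,n^2$, which can be made at most $\eps n^2$ by choosing $d,\eta$ small and $m$ large enough (all as functions of $\eps$ and $h$). In particular $G'$ is $\eps$-close to $G$.

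The key step is the counting lemma. If $G'$ still contains a copy of $H$, then there is an assignment of the vertices $v_1,\dots,v_h$ of $H$ to pairwise distinct parts $V_{i_1},\dots,V_{i_h}$ such that every pair $(V_{i_a},V_{i_b})$ corresponding to an edge of $H$ is $\eta$-regular with density at least $d$. The counting lemma then guarantees that the number of ``canonical'' embeddings of $H$ respecting this assignment is at least
\[
c(h,d,\eta)\,\prod_{a=1}^h |V_{i_a}| \;\geq\; \delta\, n^h,
\]
for some $\delta=\delta(\eps,H)>0$ depending only on $\eps$ and $H$. Since every copy of $H$ in $G'$ is also a copy in $G$, this contradicts the hypothesis that $G$ has fewer than $\delta n^h$ copies of $H$. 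Hence $G'$ must be $H$-free, and the lemma follows.

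The main obstacle is the counting lemma itself. I would prove it by induction on $h$: having embedded the first $j$ vertices of $H$ into their designated parts, regularity of each relevant pair implies that all but an $O(\eta)$-fraction of vertices in the next designated part $V_{i_{j+1}}$ have neighbourhood density close to the true density in every already-embedded neighbour-part; this preserves a $\Omega(d^{\deg_H(v_{j+1})})$-fraction of candidates at each step, yielding the claimed count. The tower-type dependence of $1/\delta$ on $\eps$ is then inherited from the tower-type bound on $m$ in Szemerédi's lemma, which is unavoidable if the proof is phrased through regularity.
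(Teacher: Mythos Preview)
Your approach is correct and is the standard textbook proof of the graph removal lemma. Note, however, that the paper does not actually give a proof of this statement: it is quoted as a known result with a citation, and the paper only remarks parenthetically that it ``can be proved using the Szemer\'edi regularity lemma''. Your proposal is precisely that proof---regularity partition, cleaning of intra-part, irregular, and sparse pairs, followed by the counting/embedding lemma---so you are in complete agreement with the route the paper points to. One small wording issue: the vertices of a surviving copy of $H$ in $G'$ need not lie in pairwise \emph{distinct} parts (non-adjacent vertices of $H$ may share a part); what matters is only that every edge of the copy lies in a regular dense pair, which is exactly what the cleaning guarantees and is all the counting lemma needs.
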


It follows almost directly from this result that, for any constant-sized graph $H$, the property of being $H$-free is testable~\cite{alon1994algorithmic}.
Combined with the fact that every testable property in the dense graph model is also \emph{tolerantly} testable \cite{fischer07}\footnote{More precisely, they prove that for every testable property there is a distance approximation algorithm. This implies tolerant testability.}, we get the following lemma which we are going to use later.

\begin{lemma} \label{lemma:tolerant_testing}
    For any graph $H$ the property of $H$-freeness is tolerantly testable in the dense graph model. The number of queries depends only on the distance parameters $\varepsilon_1,\varepsilon_2$ and on $|H|$.
\end{lemma}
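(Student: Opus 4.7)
The plan is to combine the two ingredients flagged in the text just before the lemma. First I would establish plain (one-sided) testability of $H$-freeness using the graph removal lemma, and then upgrade to \emph{tolerant} testability by invoking the Fischer--Newman theorem.

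For the first step, fix $\eps > 0$ and let $\delta = \delta(\eps, H) > 0$ be the constant from \Cref{lemma:graph_removal}. The tester samples a vertex subset $S \subseteq V(G)$ of size $s = s(\eps, |V(H)|)$ uniformly at random (with repetition, say) and accepts iff the induced subgraph $G[S]$ is $H$-free. Completeness is immediate since $H$-freeness is hereditary: every induced subgraph of an $H$-free graph is $H$-free. For soundness, the contrapositive of \Cref{lemma:graph_removal} says that if $G$ is $\eps$-far from $H$-free then $G$ contains at least $\delta n^{|V(H)|}$ copies of $H$. A standard calculation shows that a uniform random $|V(H)|$-subset of $V(G)$ captures such a copy with probability $\Omega(\delta)$; taking $s = O(|V(H)|/\delta)$ and summing over the $\binom{s}{|V(H)|}$ potential witnesses inside $S$ then drives the failure probability below $1/3$. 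The overall query complexity is $\binom{s}{2}$, depending only on $\eps$ and $|V(H)|$.

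For the second step I would invoke the result of Fischer and Newman \cite{fischer07}: every property that is testable in the dense graph model admits a \emph{distance approximation algorithm} that, given any $\gamma > 0$, returns with constant probability an additive $\gamma$-approximation to $D(G, P)$ using a number of queries depending only on $\gamma$ and on the original tester's query complexity. Tolerant testability with parameters $\eps_1 < \eps_2$ then follows by running the distance approximator with $\gamma = (\eps_2 - \eps_1)/3$ and thresholding the estimate at $(\eps_1 + \eps_2)/2$. Plugging in the tester from the first step, the total query count depends only on $\eps_1$, $\eps_2$, and $|H|$, as required.

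The main obstacle is really the Fischer--Newman upgrade itself, which is a deep result but is used here purely as a black box; the rest is the textbook argument that the graph removal lemma immediately yields testability of subgraph freeness, so assembling the two ingredients is straightforward.
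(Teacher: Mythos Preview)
Your proposal is correct and follows exactly the paper's own argument: first obtain plain testability of $H$-freeness from the graph removal lemma (the paper cites \cite{alon1994algorithmic} for this without spelling out the tester as you do), and then upgrade to tolerant testability via the Fischer--Newman distance-approximation result \cite{fischer07}. The only difference is that you make the first step explicit, whereas the paper leaves it as a one-line citation.
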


\subsection{Simplicial complexes}
A simplicial complex is a downward closed set family over a set $V$ of vertices.
As such, it can be thought of as a higher-dimensional generalisation of graphs (albeit more restrictive than hypergraphs).
\begin{definition}[Simplicial complex]
    An (abstract) simplicial complex $\Delta$ is a set of subsets of the vertex set $V$, such that if $S\in \Delta$ and $S'\subset S$ then also $S'\in\Delta$.
    The sets in $\Delta$ with cardinality $k+1$ are called the $k$-faces of $\Delta$.
\end{definition}

We are going to denote the set of $k$-faces of a complex by $F_k(\Delta)=\{S\in\Delta, |S|=k+1\}$ and its size by $d_k(\Delta)=|F_k(\Delta)|$. When it is clear from the context which simplicial complex is being considered, we will write only $F_k$ and $d_k$. If the largest subset of $V$ that is in the complex is of cardinality $D+1$ then we say that the complex is $D$-dimensional.

A \textit{clique complex} is a special case of a simplicial complex, and is defined by some underlying graph $G$. The sets in the clique complex associated to $G$ are exactly the cliques of $G$. This implies for instance that a size-$(k+1)$ subset $S\subseteq V$ is in the complex if (and only if) all the size-$k$ subsets of $S$ are in the complex.


The $k$-chain group $C_k$ of a simplicial complex $\Delta$ over an Abelian group $\textbf{\textrm{G}}$ is defined as $C_k=\{\sum_{i=1}^{d_k}\alpha_i S_i\}$ where $S_i\in F_k(\Delta)$ and $\alpha_i\in \textbf{\textrm{G}}$.
Many sources consider integer coefficients ($\textbf{\textrm{G}}=\mathbb{Z}$), but for our purpose it suffices to pick binary coefficients $\textbf{\textrm{G}} = \mathbb{Z}_2$.
This lies closest to our combinatorial interpretation of homology, which is based on \emph{unoriented} faces and for which the chain group $C_k=2^{F_k}$ is then simply the set of all subsets of $F_k$.
As a consequence, we will refer to the elements of $C_k$ either as a sum of $k$-faces or as a set of $k$-faces -- the two are equivalent.

For each $k>0$, the $k$-th boundary operator $\delta_k$ is a homomorphism that maps a $k$-face to the sum of the $(k-1)$-faces that ``surround'' the $k$-face.
\begin{definition}[Boundary operator]
    For any $k\ge 1$ the $k$-th boundary operator is a homomorphism $\delta_k:C_k\to C_{k-1}$.
    For $S\in F_k(\Delta)$ and $S=\{v_1,v_2,\dots, v_{k+1}\}$ it is defined by $\delta_k(S)=\sum_{i=1}^{k+1}S\setminus\{v_i\}$.
\end{definition}
\noindent
It follows from this definition that the boundary of a boundary is always zero, i.e.,~$\delta_{k}(\delta_{k+1}(.))=0$.
Sometimes we are going to use the same notation for \emph{boundary vector}s: for $S\in F_k$, $\delta_k(S)\in\{0,1\}^{d_{k-1}}$, where a coordinate is 1 iff the corresponding $(k-1)$-face appears in the boundary of $S$.

Now we can define the Betti numbers, that are at the center of interest in this article.
\begin{definition}[Betti number] \label{def:Betti}
    The $k$-th Betti number $\beta_k$ of a simplicial complex $\Delta$ is the rank of the $k$-th homology group:
    $$\beta_k(\Delta)=\textnormal{rk}(\ker(\delta_{k})/\im(\delta_{k+1})).$$
\end{definition}
More intuitively, define a $k$-dimensional hole as a subset $H \subseteq F_k(\Delta)$ that \emph{has} no boundary and \emph{is} no boundary.
Equivalently,  $\delta_k(H)=0$ (and so $H \in \ker(\delta_k)$) and there exists no $H' \subseteq F_{k+1}(\Delta)$ such that $\delta_{k+1}(H') = H$ (and so $H \notin \im(\delta_{k+1})$).
Then $\beta_k$ counts the number of ``independent'' $k$-dimensional holes in the complex -- where we formalize the notion of independence in the next section.



\subsection{Matroids}
The appropriate notion of independence of faces and of holes comes from matroid theory. A matroid is a downward closed set family with an additional property\footnote{In this sense matroids are a specialisation of simplicial complexes, although we are going to use them in a different way.} called the exchange property.
\begin{definition}[Matroid]
    A matroid $M$ over ground set $E$ is a family of subsets $I\subseteq 2^E$ called the independent subsets of $E$, and which satisfies the following properties.
    \begin{enumerate}
        \item $\emptyset\in I$.
        \item If $A\in I$ and $B\subseteq A$ then $B\in I$.
        \item If $A,B\in I$ and $|B|<|A|$ then $\exists v\in A\setminus B$ such that $B\cup\{v\}\in I$.
    \end{enumerate}
\end{definition}

The easiest example of a matroid is a graph. In this case, the ground set $E$ in the matroid is the edge set of the graph, and we call a subset of edges independent if it is cycle-free. Matroids that can be defined this way by a graph are called \textit{graphic matroid}s or cycle matroids.

Another important example is linear independence of vectors. The elements of $E$ are vectors from a vector space, and a subset of them is called independent if the vectors are linearly independent (over a field $F$). Matroids that can be defined in this way are called \textit{linear matroid}s (or representable over $F$).

The \emph{simplicial matroid} (or simplicial geometry) $M_k(\Delta)$ associated to a simplicial complex $\Delta$ is a linear matroid defined as follows.
It appears in e.g.~\cite{CrapoRota, simpMatroids}.
\begin{definition}[Simplicial matroid] \label{def:simpl-matroid}
The $k$-simplicial matroid $M_k(\Delta)$ associated to a simplicial complex $\Delta$ is the linear matroid whose ground set is the set of boundary vectors $\delta_k(S) \in \{0,1\}^{d_{k-1}}$ for $S \in F_k(\Delta)$.
\end{definition}
\noindent
Motivated by this, we call a subset of $k$-faces independent if the corresponding boundary vectors are linearly independent (over the field $\{0,1\}$).

A maximal independent set of a matroid $M$ is called a \textit{basis}. It is well known that all the bases of a matroid have the same size, equal to the \textit{rank} $\mathrm{rk}(M)$ of the matroid.
The full $k$-simplicial matroid $M_k(\Delta_k^{\full})$ is the $k$-simplicial matroid associated to the full complex $\Delta_k^{\full}=\{S\subseteq V, |S|\le k+1\}$ that contains all the $k+1$-subsets as $k$-faces, but it does not have any higher dimensional face.

\begin{proposition} [e.g. \cite{simpMatroids}, Proposition 6.1.5] \label{claim:rank_full}
    $\mathrm{rk}(M_k(\Delta_k^{\full})) = \binom{n-1}{k}$.
\end{proposition}

For a construction, fix a vertex $u$ of $\Delta_k^{\full}$ and take the set of $k$-faces that contain $u$. It is easy to see that this set of size $\binom{n-1}{k}$ is a basis of the matroid.

\section{Betti numbers via independent faces} \label{sec:independence}

In this section we connect the number of independent $k$-faces with the total number of $k$-faces, and connect the Betti number $\beta_k$ to the number of independent $k$- and $(k+1)$-faces in the complex.


The notion of independence of faces in \Cref{def:simpl-matroid} leads to the following useful observation.
It is a direct consequence of the fact that a set of $k$-faces has zero boundary if and only if the sum of the corresponding boundary vectors is the zero vector.

\begin{proposition} \label{claim:independence}
In a $k$-dimensional simplicial complex (i.e., $|F_{k+1}|=0$), a set of $k$-faces is independent iff no subset of them forms a $k$-dimensional hole.
\end{proposition}


The independence of holes is defined similarly.
A $k$-dimensional hole is a set of $k$-faces (an element of $C_k$), and associated to it is a characteristic vector over $\{0,1\}^{d_{k}}$. We associated the same kind of (boundary) vectors to $(k+1)$-faces: in this sense a $k$-dimensional hole can be seen as the boundary of a virtual $(k+1)$-dimensional object. 
This way, a set of holes is independent if the corresponding vectors are linearly independent (over field $\{0,1\}$).
An analogue of \Cref{claim:independence} tells us that a set of $k$-dimensional holes is independent iff no subset of them (as virtual $(k+1)$-faces) forms a $(k+1)$-dimensional hole.

Let us denote the rank $\rk(M_k(\Delta))$ of the $k$-simplicial matroid (i.e. the size of a maximal independent set of $k$-faces in $\Delta$) by~$r_k(\Delta)$. We are going to need a lower bound on this value in terms of the total number of $k$-faces $d_k(\Delta)$.
For the sake of completeness, we also include an upper bound in the statement.

\begin{lemma} \label{claim:faces}
For any $0\le k<n$ and any simplicial complex $\Delta$ 
    $$\frac{k+1}{n}d_k(\Delta) \le r_k(\Delta) \le \min\left\{d_k(\Delta), \binom{n-1}{k}\right\}.$$
\end{lemma}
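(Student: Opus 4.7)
The plan is to prove the upper bound by a short appeal to matroid restriction, and the lower bound by an averaging argument together with an elementary linear-algebraic check that the $k$-faces through any fixed vertex are independent.

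For the upper bound, I would first note that $r_k(\Delta) \le d_k(\Delta)$ is immediate since the rank of a matroid cannot exceed the size of its ground set. The inequality $r_k(\Delta) \le \binom{n-1}{k}$ follows from \Cref{claim:rank_full}: since the set of boundary vectors of $k$-faces of $\Delta$ is a subset of the set of boundary vectors for the full complex $\Delta_k^{\full}$, any $\{0,1\}$-linearly independent collection in $M_k(\Delta)$ is still linearly independent in $M_k(\Delta_k^{\full})$, whose rank is $\binom{n-1}{k}$.

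For the lower bound, the key idea is to look, for each vertex $v \in V$, at the collection $F_k^v(\Delta) := \{S \in F_k(\Delta) : v \in S\}$ of $k$-faces containing $v$, and show that this collection is always independent in $M_k(\Delta)$. Given this claim, I would finish by a simple double-counting argument: each $k$-face has exactly $k+1$ vertices, so $\sum_{v \in V}|F_k^v(\Delta)| = (k+1)\, d_k(\Delta)$, and hence some vertex $v^*$ satisfies $|F_k^{v^*}(\Delta)| \ge \frac{k+1}{n}\, d_k(\Delta)$, yielding an independent set of that size and therefore $r_k(\Delta) \ge \frac{k+1}{n}\, d_k(\Delta)$.

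The only real work is establishing independence of $F_k^v(\Delta)$, and this is the step I expect to require the most care. The idea is to track, for each $S \in F_k^v(\Delta)$, the single ``witness'' $(k-1)$-face $S \setminus \{v\}$. Because every $S \in F_k^v(\Delta)$ contains $v$, every term of $\delta_k(S)$ of the form $S \setminus \{w\}$ with $w \neq v$ still contains $v$, while $S \setminus \{v\}$ is the unique term of $\delta_k(S)$ not containing $v$. Moreover, different $S, S' \in F_k^v(\Delta)$ give different witnesses $S\setminus\{v\} \neq S'\setminus\{v\}$, since adjoining $v$ recovers $S$ and $S'$. Thus in any nontrivial $\mathbb{Z}_2$-sum $\sum_{i \in I} \delta_k(S_i)$ with $S_i \in F_k^v(\Delta)$, each witness $S_i \setminus \{v\}$ appears in exactly one of the summands and so cannot cancel; hence the sum is nonzero, proving linear independence of the boundary vectors and finishing the lemma.
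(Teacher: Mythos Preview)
Your proposal is correct and follows essentially the same approach as the paper: the upper bound via matroid restriction to the full complex, and the lower bound via the averaging/double-counting argument combined with the observation that all $k$-faces through a fixed vertex $v$ are independent because each has the distinguished witness $(k-1)$-face $S\setminus\{v\}$. Your write-up of the independence step is slightly more explicit than the paper's, but the idea is identical.
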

\begin{proof}
    Trivially, $r_k(\Delta) \le d_k(\Delta)$.
    Moreover, the set of $k$-faces $F_k(\Delta)$ of any complex $\Delta$ can be obtained from that of the full complex $F_k(\Delta^{\full}_k)$ by removing faces, and this can only decrease the rank.
    Combined with \Cref{claim:rank_full} we hence get $r_k(\Delta) \le \binom{n-1}{k}$.
    

    Now let us prove the main part of the claim, which is the lower bound.
    We use a similar argument to the one below \Cref{claim:rank_full}.
    Let $u$ be a vertex in $\Delta$ that is included in a maximum number of $k$-faces (i.e., the vertex with the highest ``$k$-face-degree'').
    These $k$-faces that contain $u$ are independent because each contains a $(k-1)$-face that the others do not (the one without $u$), and this is a non-zero element in their boundary vector. As there are $d_k$ many $k$-faces in $\Delta$, each incident to $k+1$ vertices, the average ``$k$-face-degree'' of a vertex is $(k+1)d_k/n$. Thus the independent set of $k$-faces defined by $u$ has at least this many $k$-faces, and so $r_k\ge (k+1)d_k/n$.
\end{proof}

The next lemma shows a nice connection between the rank, the number of faces and the Betti number.
For  $k=0$ the formula gives the well-known graph formula $c = n - t$, where $t$ is the number of edges in a spanning forest, $n$ is the number of vertices, and $c$ is the number of connected components.
When $k=1$ and the underlying graph is connected and planar, it gives the Euler formula $n+f=e+2$ (with $n$ the number of vertices, $f$ the number of faces surrounded by edges and $e$ the number of edges) because $\beta_1=f-1$, $d_1=e$, $r_1=n-1$ and $r_2=0$. 


\begin{lemma}[\cite{TDAlecturenotes} Proposition 3.13.] \label{lemma:rank_faces_holes}
    For any simplicial complex, $\beta_k=d_k-r_k-r_{k+1}$.
\end{lemma}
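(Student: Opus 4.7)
The plan is to unfold the definition of $\beta_k$ via rank--nullity and then identify the ranks of the images of the two relevant boundary operators with $r_k$ and $r_{k+1}$.

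First I would write
\[
\beta_k \;=\; \dim \ker(\delta_k) \;-\; \dim \im(\delta_{k+1}),
\]
which is just \Cref{def:Betti} once one observes that everything is an $\mathbb{F}_2$-vector space and the rank of a quotient of vector spaces is the difference of dimensions (and $\im(\delta_{k+1}) \subseteq \ker(\delta_k)$ by the fundamental identity $\delta_k \delta_{k+1}=0$ noted after \Cref{def:Betti}).

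Next I would relate the two image dimensions to the matroid ranks. By \Cref{def:simpl-matroid}, the ground set of $M_k(\Delta)$ is exactly the family of boundary vectors $\{\delta_k(S) : S \in F_k(\Delta)\}$, and independence in $M_k(\Delta)$ is by definition linear independence of these vectors over $\mathbb{F}_2$. Since $C_k$ has $F_k(\Delta)$ as its standard basis, the image $\im(\delta_k) \subseteq C_{k-1}$ is precisely the $\mathbb{F}_2$-span of these boundary vectors, and so its dimension equals the size of a maximal linearly independent subset, which is $r_k(\Delta)$. The same argument one level up gives $\dim \im(\delta_{k+1}) = r_{k+1}(\Delta)$.

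Finally I would apply rank--nullity to the linear map $\delta_k : C_k \to C_{k-1}$, noting $\dim C_k = d_k$, to get
\[
\dim \ker(\delta_k) \;=\; d_k \;-\; \dim \im(\delta_k) \;=\; d_k \;-\; r_k.
\]
Substituting this and $\dim \im(\delta_{k+1}) = r_{k+1}$ into the first displayed equation yields $\beta_k = d_k - r_k - r_{k+1}$, as desired. The argument is routine linear algebra once the matroid/image identification is made; the only subtle point to state carefully is that working over $\mathbb{F}_2$ (unoriented faces) is what lets us identify ``matroid rank of boundary vectors'' with ``$\mathbb{F}_2$-dimension of $\im(\delta_k)$'' without any sign bookkeeping.
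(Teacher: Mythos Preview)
Your proof is correct, but it takes a genuinely different route from the paper's.

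The paper proves the lemma by a combinatorial induction: it starts from a subcomplex consisting of a maximal independent set of $k$-faces (where trivially $d_k=r_k$ and $\beta_k=r_{k+1}=0$), then adds the remaining $k$-faces one at a time, arguing that each added (necessarily dependent) $k$-face creates exactly one new independent $k$-hole; finally it adds the $(k+1)$-faces one at a time and argues that each independent $(k+1)$-face fills exactly one independent $k$-hole. This yields $\beta_k = (d_k - r_k) - r_{k+1}$. Your argument instead bypasses the induction entirely by identifying $r_k$ with $\dim\im(\delta_k)$ (an identification the paper itself spells out in the paragraph immediately following its proof) and then applying rank--nullity. Your route is shorter and is the ``standard'' linear-algebra proof; the paper's route is more hands-on and makes explicit the mechanism by which adding dependent faces creates holes and adding higher faces fills them, which ties in with the paper's matroid-theoretic narrative and with \Cref{claim:independence}.
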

\begin{proof}
    By applying the rank–nullity theorem to $\delta_k$, we get $d_{k}=\dim \ker \delta_{k}+\dim \im \delta_{k}$. Now notice that $\dim \im \delta_k=r_k$ because the independence of $k$-faces is defined through their boundary vectors (\Cref{def:simpl-matroid}), thus we have $\dim \ker \delta_{k}=d_k-r_k$. From \Cref{def:Betti} we can see that $\beta_k=\dim \ker \delta_k - \dim \im \delta_{k+1}$. Substituting what we got before we obtain $\beta_k=(d_k-r_k)-r_{k+1}$.
\end{proof}

For the interested reader we also we also give an alternative, combinatorial proof of this lemma in \Cref{appendix:proof}, which ties closer to the spirit of this work.

In the special case where $k=0$ and the graph defined by the vertices and edges of $\Delta$ is connected, we have $\beta_0=1$, $d_0=n$ and $r_{k+1}=n-1$ (a spanning tree of the graph is a maximal independent edge set). Thus, $r_0$ has to be defined as 0, which makes sense if we think about $r_k$ as $r_k=\dim(\im(\delta_k))$.

\begin{remark} \label{remark:literature}
Let $T_\ell = \{S \subseteq V, \, |S| = \ell+1\}$ and $A\subseteq T_{k}$. In early works \cite{CrapoRota, Cordovil} only complexes of the form $\Delta^{\full}_{k-1} \cup A$ are analysed in detail. This family of complexes is not enough to express the $k$-th Betti number of an arbitrary simplicial complex. For example, for this restricted class of complexes Cordovil \cite[Proposition 1.2]{Cordovil} showed that $r_k=d_k-\beta_k$, which is only a special case of \Cref{lemma:rank_faces_holes} (with $r_{k+1}=0$).
\end{remark}

An easy consequence of \cref{lemma:rank_faces_holes} is the following statement.
\begin{proposition} \label{claim:betti_upperbound}
    For any simplicial complex $\Delta$ and $k\ge 1$, $\beta_k(\Delta)\le \binom{n-1}{k+1}$.
\end{proposition}
\begin{proof}
    In $\Delta^{\full}_k$ we have $\beta_k=d_k-r_k-0=\binom{n}{k+1}-\binom{n-1}{k}=\binom{n-1}{k+1}$, and removing $k$-faces or adding $(k+1)$-faces cannot increase this value.
\end{proof}

\section{Testing large Betti numbers} \label{sec:testing-betti}

In this section we turn to our main result, proving that we can test whether a Betti number is large. We now state our main theorem again.

\begin{theorem}[formal version of \Cref{thm:Betti-testable}] \label{thm:testing_Betti_formal}
    Consider a clique complex $\Delta$.
    For any constant $k$ and $\eps>0$, there exists $\delta(\eps,k)>0$ such that the property of having $k$-th Betti number $\beta_k(\Delta) \geq (1-\delta) d_k$ (over $\mathbb{Z}_2$) is testable for any $\delta \leq \delta(\eps,k)$ (with distance parameter $\eps$).
\end{theorem}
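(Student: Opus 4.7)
The plan is to reduce testing the property $\beta_k(\Delta) \geq (1-\delta) d_k$ to tolerantly testing $K_{k+2}$-freeness via \Cref{lemma:tolerant_testing}. Concretely, I would pick closeness parameter $\eps_1 = \eps/4$ and farness parameter $\eps_2 = \eps/2$, run the tolerant $K_{k+2}$-freeness tester with these thresholds, and answer YES if it reports $\eps_1$-closeness and NO if it reports $\eps_2$-farness. Correctness then follows from two structural implications sketched below.

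\emph{Completeness.} If $\beta_k \geq (1-\delta) d_k$, combining the identity $\beta_k = d_k - r_k - r_{k+1}$ from \Cref{lemma:rank_faces_holes} with the rank lower bound $r_{k+1} \geq \frac{k+2}{n} d_{k+1}$ from \Cref{claim:faces} yields
\[
d_{k+1} \leq \frac{n}{k+2}\, r_{k+1} \leq \frac{n\delta}{k+2}\, d_k \leq \frac{\delta\, n^{k+2}}{(k+2)!}.
\]
Applying \Cref{lemma:graph_removal} with $H = K_{k+2}$ then guarantees that $G$ is $\eps_1$-close to being $K_{k+2}$-free, provided $\delta/(k+2)!$ is at most the removal-lemma constant associated with $(\eps_1, K_{k+2})$.

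\emph{Soundness (contrapositive).} Suppose $G$ is $\eps_2$-close to a $K_{k+2}$-free graph $G'$. In the clique complex of $G'$ one has $d_{k+1}(G')=0$, hence $r_{k+1}(G')=0$, so $\beta_k(G') = d_k(G') - r_k(G')$; by \Cref{claim:faces}, $r_k(G') \leq \binom{n-1}{k}$. I then split on the value of $d_k(G')$. If $d_k(G') \geq \binom{n-1}{k}/\delta$ then $r_k(G') \leq \delta d_k(G')$, so $G'$ itself satisfies the property and $G$ is $\eps_2$-close to it. Otherwise $d_k(G') = O_{\delta,k}(n^k)$, and a second application of \Cref{lemma:graph_removal} with $H = K_{k+1}$ at scale $\eps/2$ yields a $K_{k+1}$-free graph $G''$ within edit distance $\eps/2$ of $G'$; then $d_k(G'') = \beta_k(G'') = 0$, so $G''$ vacuously satisfies the property, and $G$ is $(\eps_2 + \eps/2)$-close to the property. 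In both cases $G$ is $\eps$-close to the property, as required.

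The main obstacle will be calibrating $\delta(\eps,k)$. The completeness step pins $\delta$ to the removal-lemma constant for $K_{k+2}$ at scale $\eps_1 = \eps/4$, which is of tower type in $1/\eps$ for $k \geq 2$; this is exactly the source of the tower dependence in the theorem statement. The soundness step imposes only mild conditions relating $n$ to the removal-lemma constant for $K_{k+1}$ at scale $\eps/2$, which hold for all $n$ beyond some $O_{\eps,k}(1)$ threshold (graphs below this threshold can be handled by a trivial brute-force tester that reads the entire input in a constant number of queries).
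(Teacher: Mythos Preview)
Your approach is essentially the paper's: reduce to tolerantly testing $K_{k+2}$-freeness, with the identical completeness argument via \Cref{lemma:rank_faces_holes}, \Cref{claim:faces}, and the removal lemma. The difference is in the soundness direction. The paper asserts (in the footnote to \Cref{lem:betti-to-independent}) that a graph $\eps$-far from the property is automatically $\eps$-far from $K_{k+2}$-freeness, implicitly treating every $K_{k+2}$-free graph as satisfying $\beta_k \geq (1-\delta)d_k$; but that implication is false in general (e.g.\ for $k=1$ the path $P_n$ is triangle-free yet has $\beta_1 = 0 < (1-\delta)d_1$). Your case split on $d_k(G')$ and second invocation of the removal lemma for $K_{k+1}$ handles exactly this: when $d_k(G')$ is too small for $r_k(G') \leq \delta d_k(G')$ to follow from \Cref{claim:faces}, you pass to a nearby $K_{k+1}$-free graph where the property holds vacuously. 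The price is a factor-two slack in the distance parameter (hence your choice $\eps_2 = \eps/2$ rather than $\eps$) and a finite-$n$ threshold absorbed by brute force. So your proof follows the same strategy and inherits the same tower-type $\delta(\eps,k)$, but is more careful on the soundness side than the paper's footnote argument.
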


Even though our results in \Cref{sec:independence} hold for general simplicial complexes, the main theorem is restricted to clique complexes. The reason for this is that we wish to phrase our results in the well-established setting of graph property testing. By constraining ourselves to clique complexes, having a large Betti number becomes a graph property (of the underlying graph) rather than a property of an abstract simplicial complex. Also, this way we can use some previous results from graph property testing, like the tolerant testability of subgraph freeness (\Cref{lemma:tolerant_testing}).

\subsection{Warm-up: testing many components} \label{sec:beta-0}

The $0$-th Betti number $\beta_0$ of a clique complex $\Delta$ equals the number of connected components of the underlying graph $G$.
As an informal warm-up and a blueprint for the general case, we show how to test whether $\beta_0 \geq (1-\delta) n$.
The argument involves two reductions.

First, we argue that having a large 0-th Betti number is equivalent to having few independent edges. 
From \Cref{lemma:rank_faces_holes} we get that $\beta_0 = n - r_1-r_0$ where $r_0=0$.
Thus, $\beta_0 \geq (1-\delta) n$ is equivalent to $r_1\le \delta n$, and so testing large $\beta_0$ reduces to testing whether $G$ has a small number of independent edges.

Now comes the second reduction, in which we argue that testing whether $G$ has few independent edges can be reduced to \emph{tolerantly} testing edge-freeness.
For this, note that if $G$ has $r_1 \leq \delta n$ independent edges then the total number of edges $|E| \leq \binom{\delta n+1}{2} < \delta^2 n^2/2 + O(n)$,\footnote{From \Cref{claim:faces} we would get $|E|\le r_1 n/2 \le \delta n^2/2$. We get the better bound by noticing that if there are $\delta n$ independent edges, then we have a maximum number of edges if all the independent edges are in the same connected component and this component is a $K_{\delta n+1}$.} and so $G$ must be $1.1 \delta^2/2$-close to being edge-free.
On the other hand, if $G$ is $\eps$-far from having $r_1 \leq \delta n$, then $G$ must also be $\eps$-far from having $r_1=0$, i.e from being edge-free.
So we reduced the problem of testing $\beta_0 \geq (1-\delta) n$ to that of tolerantly testing edge-freeness (with parameters $\eps_1 = 1.1\delta^2/2$ and $\eps_2 = \eps$).
It remains to note that edge-freeness is tolerantly testable by \Cref{lemma:tolerant_testing}.

\subsection{General case} \label{sec:beta-k}

We now turn to proving our general result (\Cref{thm:testing_Betti_formal}), that having a Betti number $\beta_k \geq (1-\delta) d_k$ is testable for constant $k$.
Following the blueprint from the previous section, we first reduce the problem to testing whether there are few independent $(k+1)$-faces, and then reduce testing few independent $(k+1)$-faces to tolerantly testing $(k+2)$-clique freeness.

We consider a clique complex $\Delta$ with underlying graph $G$. 
From \Cref{lemma:rank_faces_holes} we get that
\begin{equation} \label{eq:beta_k}
d_k - r_{k+1} \geq \beta_k = d_k - r_{k+1} - r_k,
\end{equation}
from which we can prove the following lemma.


\begin{lemma}[Large Betti number $\preceq$ few independent cliques] \label{lem:betti-to-independent}
In a clique complex $\Delta$ with underlying graph $G$, if $\beta_k \geq (1-\delta) d_k$ then $r_{k+1} \leq \delta d_k$.
If $G$ is $\eps$-far from having $\beta_k \geq (1-\delta) d_k$ in $\Delta$ then it is $\eps/2$-far from having $r_{k+1}=0$, i.e., $G$ is $\eps/2$-far from $K_{k+2}$-freeness.
\end{lemma}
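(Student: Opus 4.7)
Both parts of the lemma ought to follow from \Cref{lemma:rank_faces_holes} (with \Cref{claim:faces} helping for the second). For the first inequality, I would plug the formula $\beta_k = d_k - r_k - r_{k+1}$ into the hypothesis $\beta_k \geq (1-\delta) d_k$, rearrange to $r_k + r_{k+1} \leq \delta d_k$, and then drop the non-negative term $r_k$ to obtain $r_{k+1} \leq \delta d_k$.

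For the second (testing) part, I would argue by contrapositive: assume $G$ is $\eps$-close to some $K_{k+2}$-free graph $G'$, and exhibit a graph at distance at most $\eps$ from $G$ that satisfies $\beta_k \geq (1-\delta) d_k$. The natural candidate is $G'$ itself. Since $G'$ is $K_{k+2}$-free, its clique complex has no $(k+1)$-faces, so $r_{k+1}(G') = 0$, and \Cref{lemma:rank_faces_holes} collapses to $\beta_k(G') = d_k(G') - r_k(G')$. Combining this with the bound $r_k(G') \leq \binom{n-1}{k}$ from \Cref{claim:faces} yields
\[
\beta_k(G') \geq d_k(G') - \tbinom{n-1}{k},
\]
which is at least $(1-\delta) d_k(G')$ as soon as $d_k(G') \geq \binom{n-1}{k}/\delta$. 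In that regime $G'$ itself is the desired witness and we are done.

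The main obstacle is the complementary regime, where $d_k(G')$ is too small for the above bound. In that case $G'$ contains at most $\binom{n-1}{k}/\delta = O(n^k/\delta)$ many $(k+1)$-cliques, which is $o(n^{k+1})$ for fixed $k$, and I would make a further small modification to $G'$ that removes them all, producing a $K_{k+1}$-free graph $G^{\ast}$ with $d_k(G^{\ast}) = \beta_k(G^{\ast}) = 0$, which trivially satisfies the Betti inequality. Bounding the cost of this additional clean-up is the delicate step: for $k \leq 1$ a direct edge-counting argument (remove one edge per surviving $(k+1)$-clique) adds only $o(1)$ to the distance, while for $k \geq 2$ I would invoke \Cref{lemma:graph_removal} with $H = K_{k+1}$ to obtain the same conclusion, and work in the standard property-testing regime of constants $\eps,\delta$ and $n$ sufficiently large relative to $\eps,\delta,k$ so that the total distance $D(G, G^{\ast})$ still fits inside the $\eps$ budget.
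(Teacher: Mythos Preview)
Your argument for the first inequality is exactly the paper's: plug $\beta_k=d_k-r_k-r_{k+1}$ from \Cref{lemma:rank_faces_holes} into the hypothesis and drop the nonnegative $r_k$.

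For the second part the paper takes a much shorter route than you do. Its entire argument is the footnote: the Betti condition is rewritten as $r_{k+1}\le \delta d_k - r_k$, and then it asserts that being $\eps$-far from this ``certainly'' implies being $\eps$-far from $r_{k+1}=0$ (i.e.\ $K_{k+2}$-freeness). That inference would be valid if every $K_{k+2}$-free graph satisfied $r_{k+1}\le \delta d_k - r_k$, i.e.\ $r_k\le\delta d_k$; but this fails in general (for $k=1$ a tree has $r_1=d_1$), which is precisely the obstacle you identified. So the paper's one-line proof glosses over a real gap, and your two-case analysis is a genuine and necessary repair rather than an alternative route.

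Two remarks on your repair. First, your Case~2 (small $d_k(G')$, then pass to a $K_{k+1}$-free $G^\ast$ via \Cref{lemma:graph_removal}) produces a witness at distance $\eps+o_n(1)$ from $G$, not $\eps$; so what you actually establish is the contrapositive ``$\eps$-far from the Betti property $\Rightarrow$ $(\eps-o(1))$-far from $K_{k+2}$-free''. This is strictly weaker than the lemma as stated, but it is all that is needed downstream, since the tolerant test in \Cref{sec:beta-k} already works with the gap $\eps/2$ versus $\eps$. (In fact the tree example shows the lemma with the \emph{same} $\eps$ on both sides cannot hold for all $n$.) Second, note that your Case~1 threshold and your Case~2 clean-up both need $n$ large relative to $k,\delta,\eps$; you flag this, and it is harmless for the theorem, but it should be stated explicitly in the lemma if one writes it up this way.
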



\begin{proof}
The first part of the claim is clear from \cref{eq:beta_k}. For the second part (being $\eps$-far) we will use the definition of $\eps$-far (\Cref{def:epsfar}). I.e. we want to prove that if every graph that is $\eps$-close to $G$ has $\beta_k<(1-\delta)d_k$ then every graph that is $\eps/2$-close to $G$ satisfies $r_{k+1}>0$.

For contradiction assume that there is a particular $H$ that is $\eps/2$-close to $G$ but has $r_{k+1}=0$. Since $H$ is $\eps$-close to $G$, it has $\beta_k<(1-\delta)d_k$, or equivalently $r_k+r_{k+1}>\delta d_k$ (using \Cref{lemma:rank_faces_holes}). Because of this, we have $d_k<r_k/\delta\le \binom{n-1}{k}/\delta$ (by \Cref{claim:faces}). With the construction of \Cref{claim:construction} below, we can modify $H$ to get an $H'$ that is $\alpha=\eps/2$-close to $H$ (thus still $\eps$-close to $G$) and that has $\beta_k\ge(1 - \delta)d_k$. This contradicts the assumption that every graph that is $\eps$-close to $G$ satisfies $\beta_k<(1 - \delta)d_k$.
\end{proof}

\begin{proposition} \label{claim:construction}
    Consider a graph $H=(V,E)$ with $|V|=n$ sufficiently large, and assume that $H$ has at most $\binom{n-1}{k}/\delta$ $k$-faces.
    Then for any constant proximity parameter $\alpha$, there is another graph $H'$ that is $\alpha$-close to $H$ and has $\beta_k\ge(1 - \delta)d_k$.
\end{proposition}
\begin{proof}
We give a construction that modifies $H$ to get $H'$. Let us choose any vertex set $S\subseteq V$ of size $|S|=\alpha n$.
We delete all the edges that go between $S$ and $V\setminus S$, and we modify the edges within $S$ to construct a complete $(k+1)$-partite subgraph.
This modifies at most $\alpha n^2$ edges, and so yields a graph $H'$ that is $\alpha$-close to $H$.

The subgraph of $H'$ induced by $S$ is a complete $(k+1)$-partite graph with $\left(\frac{\alpha n}{k+1}\right)^{k+1}$ many $k$-faces.
Thus, in $H'$ we have at most this amount plus the number of original $k$-faces of $H$, i.e. $d_k(H')\le \left(\frac{\alpha n}{k+1}\right)^{k+1} + \binom{n-1}{k}/\delta$.
The number of independent $k$-holes in the subgraph of $H'$ induced by $S$ is $\left(\frac{\alpha n}{k+1}-1\right)^{k+1}$ (see for instance \cite[Proposition 1]{berry2022quantifying}), so in $H'$ it is at least this much: $\beta_k(H')\ge \left(\frac{\alpha n}{k+1}-1\right)^{k+1}$.
Clearly $\beta_k(H')/d_k(H') = 1 - O(1/n)$.
For any $\delta > 0$ this is at least $1-\delta$ for $n$ sufficiently large.
\end{proof}

\begin{remark}
    In \Cref{lem:betti-to-independent} the second proximity parameter is not necessarily half of $\eps$, it can be arbitrarily close to it. E.g. it could be $0.99\eps$, but then we have to use the construction of \Cref{claim:construction} with $\alpha=0.01\eps$ instead of $\eps/2$.
\end{remark}

For our second reduction, we use \Cref{claim:faces}, which tells us that if $r_{k+1} \leq \delta d_k$ then 
\[
d_{k+1}
\leq \frac{\delta}{k+2} n d_k
\leq \frac{\delta}{k+2} n \binom{n}{k+1}
\leq \frac{\delta}{(k+2)!} n^{k+2}.
\]
Combined with \Cref{lem:betti-to-independent} we get that $\beta_k \geq (1-\delta) d_k$ implies $d_{k+1}\leq \frac{\delta}{(k+2)!} n^{k+2}$.
We see that a large Betti number implies a small number of $K_{k+2}$'s in the graph, while being far from having a large Betti number implies being far from $K_{k+2}$-freeness (by \Cref{lem:betti-to-independent}).

In fact, by the graph removal lemma (\Cref{lemma:graph_removal}), a small number of $K_{k+2}$'s implies that the graph is close to being $K_{k+2}$-free.
More specifically,
%
%
for any $\eps'>0$ there exists $\delta = \delta(k,\eps') > 0$ such that if $G$ has at most $\frac{\delta}{(k+2)!} n^{k+2}$ many $K_{k+2}$'s then $G$ is $\eps'$-close to being $K_{k+2}$-free.
By picking (say) $\eps' = \eps/2$, it follows that we can test whether $\beta_k \geq (1-\delta) d_k$ by tolerantly testing whether $G$ is $\eps/2$-close or $\eps$-far from $K_{k+2}$-freeness.
By \Cref{lemma:tolerant_testing} we know that $K_{k+2}$-freeness is indeed tolerantly testable, and this proves our main \Cref{thm:testing_Betti_formal}.

To finish, we comment on the scaling of $\delta(k,\eps)$ (the complexity of the algorithm is dominated by $1/\delta$).
The current best upper bound in the graph removal lemma requires $\delta(k,\eps) \leq 1/\mathrm{tower}(5 (k+2)^4 \log(1/\eps))$ \cite{fox2011new}, where $\mathrm{tower}(i)$ is a tower of twos of height $i$ (e.g., $\mathrm{tower}(3) = 2^{2^2}$).
For the case of $k=0$ we could avoid this: recall from \cref{sec:beta-0} that $r_1 \leq \delta n$ implies that $G$ is $\delta^2/2$-close to being edge-free.
Similarly, for $k=1$ we can get a better bound: $r_2 \leq \delta n^2$ implies that $G$ is $3 \delta$-close to being triangle-free. Indeed, if we remove all the edges of a maximal independent triangle set (at most $3\delta n^2$ edges), then any remaining triangle in the graph would contradict the maximality of the chosen set.
We leave the extension of similar arguments to higher $k$ for future work.
In conclusion, we get a tester that distinguishes $\beta_k \geq (1-\delta) d_k$ from being $\eps$-far under the constraints
\[
\delta < \sqrt{2\eps} \;\; (k=0), \qquad
\delta < \eps/3 \;\; (k=1), \qquad
\delta < 1/\mathrm{tower}(5(k+2)^4 \log(1/\eps)) \;\; (k>1).
\]

\section{Acknowledgements}
The authors were partially supported by
French projects EPIQ (ANR-22-PETQ-0007), QUDATA (ANR-18-CE47-0010) and QUOPS (ANR22-CE47-0003-01), and EU project QOPT (QuantERA ERA-NET Cofund 2022-25).

We thank anonymous reviewers for suggestions on improving the manuscript (and for suggesting the algebraic proof of \Cref{lemma:rank_faces_holes}).

\bibliographystyle{alpha}
\bibliography{biblio}

\appendix
\section{An alternative, combinatorial proof of \Cref{lemma:rank_faces_holes}} \label{appendix:proof}

\begin{proof}
    The proof goes by induction. Let $\Delta$ denote the simplicial complex being considered and let us take a basis of the $k$-simplicial matroid over $\Delta$.
    For the base case, we consider the subcomplex where this is the set of all $k$-faces and all the higher dimensional faces are removed, in which case $r_k=d_k$ and $\beta_k=r_{k+1}=0$ and so the formula holds.
    In the inductive step we will put back all of the removed faces.
    We start by adding the rest of the $k$-faces one by one, and we argue that each added face creates exactly one new independent hole.
    
    First, note that adding a dependent $k$-face $S$ to the complex creates at least one hole (otherwise we could have added it to the basis by \Cref{claim:independence}).
    Moreover, the hole is independent of the previous ones because it contains the face $S$, which no other hole contains so far.

    Then, we prove that adding a $k$-face creates at most one hole.
    By contradiction, assume that there is a $k$-face $S$ such that when added to the set, more than one new independent holes are created. We consider two of them, $\{S,R_1,\dots,R_p\}$ and $\{S,T_1,\dots,T_q\}$, which we call the ``$R$-hole'' and the ``$T$-hole''. Necessarily they have zero boundary (we denote the boundary vectors the same way as the $k$-faces):
    \begin{align*}
        S+R_1+\dots+R_p&=0 \\
    S+T_1+\dots+T_q&=0.
    \end{align*}
    Adding the equations shows that $\{R_1,\dots,R_p,T_1,\dots,T_q\}$ must also be a hole, call it the ``$RT$-hole''.
    It does not contain $S$, and so must have been present before adding $S$.
    However, by construction, the $R$-, $T$- and $RT$-holes are not independent, and so we get a contradiction.

    Let $\Delta_k$ denote the complex we have now: it contains exactly the faces of $\Delta$ up to dimension $k$, and no faces of higher dimension. So far we proved that $r_k=d_k-\beta_k(\Delta_k)$.
    Let us consider the set of ``potential $k$-holes'' in $\Delta_k$, i.e. sets $H$ of cardinality $k+2$ where all the $(k+1)$-subsets of $H$ are in $\Delta_k$. These are those holes of $\Delta_k$ that may be filled by $(k+1)$-faces in $\Delta$.

    Now we continue the induction by adding to $\Delta_k$ the $(k+1)$-faces of $\Delta$ one by one (and in the end the higher dimensional faces as well) to get back $\Delta$.
    Each $(k+1)$-face fills a potential hole, and it is independent of the previously added ones if and only if the hole being filled is independent of the previously filled ones (as they are the same subset). Thus, every time $\Delta$ gains an independent $(k+1)$-face it loses an independent $k$-hole. This finishes the proof, as adding faces of dimension larger than $k+1$ does not change any parameter in the claim.
\end{proof}

\end{document}